\newtheorem{theorem}{Theorem}
\newtheorem{claim}[theorem]{Claim}
\newtheorem{definition}[theorem]{Definition}
\newtheorem{example}[theorem]{Example}
\newtheorem{notation}[theorem]{Notation}
\newtheorem{remark}[theorem]{Remark}
\newenvironment{proof}[1][Proof]{\textbf{#1.} }{\ \rule{0.5em}{0.5em}}
\begin{document}

\begin{center}
{\LARGE Limits to the scope of applicability of \textit{extended formulations%
\\[0pt]
}for LP models of combinatorial optimization problems}{\Large \medskip
\medskip }

Moustapha Diaby

OPIM Department; University of Connecticut; Storrs, CT 06268\\[0pt]
moustapha.diaby@business.uconn.edu\bigskip

Mark H. Karwan

Department of Industrial and Systems Engineering; SUNY at Buffalo; Amherst,
NY 14260\\[0pt]
mkarwan@buffalo.edu\bigskip\ \ 
\end{center}

\textsl{Abstract}{\small : The purpose of this paper is to bring to
attention and to make a contribution to the issue of defining/clarifying the
scope of applicability of \textit{extended formulations} (EF's) theory.
Specifically, we show that EF theory is not valid for relating the sizes of
descriptions of polytopes when the sets of the descriptive variables for
those polytopes are disjoint, and that new definitions of the notion of
``projection'' upon which some of the recent \textit{extended formulations}
works (such as Kaibel (2011), Fiorini\textit{\ et al.} (2011, 2012), Kaibel
and Walter (2013), and Kaibel and Weltge (2013), for example) have been
based can cause those works to over-reach in their conclusions.\bigskip }

\textsl{Keywords:}\textbf{\ }\textit{Extended Formulations}{\small ; Linear
Programming; Combinatorial Optimization; Computational Complexity.}$\medskip 
$

\section{Introduction\label{Introduction_Section}}

There has been a renewed interest and excellent work in \textit{extended
formulations} (EF's) over the past few years (see Conforti \textit{et al}.
(2010; 2013), and Vanderbeck and Wolsey (2010), for example). Since the
seminal paper (Yannakakis (1991)), EF theory has been the single-most
important paradigm for deciding the validity of proposed LP models for
NP-Complete problems. However, the issue of its scope of applicability has
been a largely overlooked issue, leading to the possibility of over-reaching
claims (implied or explicitly-stated). The purpose of this paper is to make
a contribution towards addressing the issue of delineating the scope of
applicability of EF theory. Specifically, we will show that EF theory is not
valid for relating the sizes of descriptions of polytopes when the sets of
the descriptive variables for those polytopes are disjoint, and that new
definitions of the notion of ``projection'' upon which some of the recent 
\textit{extended formulations} works (such as Kaibel (2011), Fiorini\ et al.
(2011, 2012), Kaibel and Walter (2013), and Kaibel and Weltge (2013), for
example) have been based can cause those works to over-reach in their
conclusions when the sets of the descriptive variables for the polytopes
being related are disjoint or can be made so after redundant variables and
constraints (with respect to the optimization problem at hand) are removed.

It should be noted that the intent of the paper is not to claim the
correctness or incorrectness of any particular model that may have been
developed in trying to address the ``$P=NP$'' question. Our aim is,
strictly, to bring attention to limits to the scope within which EF theory
is applicable when attempting to derive bounds on the sizes of linear
programming models for combinatorial optimzation problems. In other words,
the developments in the paper are not about deciding the
correctness/incorrectness of any given LP model, but only about the issue of
when such a decision (of correctness/incorrectness) is beyond the scope of
EF theory.

One of the most fundamental assumptions in EF theory is that the addition of
redundant variables and constraints to a given model of an optimization
problem at hand does not change the EF relationships for that model. The key
point of this paper is to show the flaw in this assumption, which is that it
leads to ambiguity and degeneracy/loss of meaningfulness of the notion of EF
when the sets of the descriptive variables for the polytopes involved are
disjoint. We show that if redundant variables and constraints can be
arbitrarily added to the description of a given polytope for the the purpose
of establishing EF relationships, then every given mathematical programming
model would be an EF of every other mathematical programming model, provided
their sets of descriptive variables are disjoint, which would clearly mean a
loss of meaningfulness of the notion (of EF).

Our developments in this paper were initially motivated by our realization
that the ``new'' definition of EFs (Definition \ref{Extended_Polytope_Dfn2})
which was first proposed in Kaibel (2011) and then subsequently used in
Fiorini et al. (2011) becomes inconsistent with other and
previous/``standard'' definitions of EFs (see Definition \ref%
{Extended_Polytope_Dfn}) when the sets of the descriptive variables of the
polytopes being related are disjoint. Comments we received in private
communications and also in anonymous reviews on an earlier version of this
paper were that our finding of inconsistency was ``obvious,'' and that all
of our developments were ``obvious'' because of that. Hence, we believe it
may be useful to recall at this point that the Fiorini et al. (2011) work
(which is based on the ``new'' definition in question) has been
highly-recognized, having received numerous awards. Also, more generally,
and perhaps more deeply, Martin's formulation of the Minimum Spanning Tree
Problem (MSTP; see section \ref{Min_Span_Tree_SubSection} of this paper) is
cited in almost every EF paper in the current literature as a ``normal'' EF
of Edmonds' formulation of the MSTP (although with the acknowledgement that
it ``escapes'' the results for EF of NP-Complete problems somehow). We argue
that these facts highlight the need to bring our notions in this paper to
the attention of the Optimization communities in general, and of the EF
communities in particular.

The plan of the paper is as follows. First, we will review the background
definitions in section \ref{Background_Section}. Our main result (i.e., the
non-validity/non-applicability of EF theory when the sets of descriptive
variables are disjoint) is developed in section \ref%
{Non-applicability_Section}. In section \ref{Illustration_Section}, we
illustrate the discussions of section \ref{Non-applicability_Section} using
the Fiorini et al. (2011; 2012) developments, as well as Martin's (1991) LP
formulation of the Minimum Spanning Tree Problem (MSTP). In section \ref%
{Auxiliary_Model_Section}, we provide insights into the (correct)
meaning/implication of the existence of a linear map between solutions of
models, with respect to the task of solving an optimzation problem, when the
set of descriptive variables in the models are disjoint. Finally, we offer
some concluding remarks in section \ref{Conclusions_Section}.\medskip

The general notation we will use is as follows.

\begin{notation}
\label{TSP_Gen'l_Notations}\ \ 
\end{notation}

\begin{enumerate}
\item $\mathbb{R}:$ \ Set of real numbers; \ 

\item $\mathbb{R}_{\mathbb{\nless }}:$ \ Set of non-negative real numbers;

\item $\mathbb{N}:$ \ Set of natural numbers;

\item $\mathbb{N}_{\mathbb{+}}:$ \ Set of positive natural numbers;

\item $``\mathbf{0}":$ \ Column vector that has every entry equal to $0$;

\item $``\mathbf{1}":$ \ Column vector that has every entry equal to $1$;

\item $(\cdot )^{T}:$ \ Transpose of $(\cdot )$;

\item $Conv(\cdot ):$ \ Convex hull of $(\cdot )$.\medskip
\end{enumerate}

\section{Background definitions\label{Background_Section}}

For the purpose of making the paper as self-contained as possible, we review
the basic definitions of \textit{extended formulations} in this section.

\begin{definition}[``Standard EF Definition'' (Yannakakis\ (1991); Conforti 
\textit{et al}. (2010; 2013))]
\label{Extended_Polytope_Dfn}An \textit{extended formulation} for a polytope 
$X$ $\subseteq $ $\mathbb{R}^{p}$ is a polyhedron $U$ $=$ $\{(x,w)$ $\in $ $%
\mathbb{R}^{p+q}$ $:$ $Gx$ $+$ $Hw$ $\leq $ $g\}$ the projection, $\varphi
_{x}(U)$ $:=$ $\{x\in \mathbb{R}^{p}:$ $(\exists w\in \mathbb{R}^{q}:$ $%
(x,w) $ $\in $ $U)\},$ of which onto $x$-space is equal to $X$ (where $G$ $%
\in \mathbb{R}^{m\times p},$ $H\in \mathbb{R}^{m\times q},$ and $g\in 
\mathbb{R}^{m}$).
\end{definition}

\begin{definition}[``Alternate EF Definition \#1\textit{''} (Kaibel (2011);
Fiorini \textit{et al}. (2011; 2012))]
\label{Extended_Polytope_Dfn2}A polyhedron $U$ $=$ $\{(x,w)$ $\in $ $\mathbb{%
R}^{p+q}$ $:$ $Gx$ $+$ $Hw$ $\leq $ $g\}$ is an \textit{extended formulation}
of a polytope $X$ $\subseteq $ $\mathbb{R}^{p}$ if there exists a linear map 
$\pi $ $:$ $\mathbb{R}^{p+q}$ $\longrightarrow $ $\mathbb{R}^{p}$ such that $%
X$ is the image of $Q$ under $\pi $ (i.e., $X=\pi (Q)$; where $G\in \mathbb{R%
}^{m\times p}$, $H\in \mathbb{R}^{m\times q},$ and $g\in \mathbb{R}^{m}$).
(Kaibel (2011), Kaibel and Walter (2013), and Kaibel and Weltge (2013) refer
to $\pi $ as a ``projection.'')
\end{definition}

\begin{definition}[``Alternate EF Definition \#2'' (Fiorini \textit{et al}.
(2012))]
\label{Extended_Polytope_Dfn3}An \textit{extended formulation} of a polytope 
$X$ $\subseteq $ $\mathbb{R}^{p}$ is a linear system $U$ $=$ $\{(x,w)$ $\in $
$\mathbb{R}^{p+q}$ $:$ $Gx$ $+$ $Hw$ $\leq $ $g\}$ such that $x\in X$ if and
only if there exists $w\in \mathbb{R}^{q}$ such that $(x,w)\in U.$ (In other
words, $U$ is an EF of $X$ if $(x\in X\Longleftrightarrow (\exists $ $w\in 
\mathbb{R}^{q}:(x,w)\in U))$) (where $G$ $\in \mathbb{R}^{m\times p},$ $H\in 
\mathbb{R}^{m\times q},$ and $g\in \mathbb{R}^{m}$)$.\medskip $
\end{definition}

The purpose of this paper is to point out that the scope of applicability of
EF work based on the above definitions is limited to cases in which $U$
cannot be equivalently reformulated (with respect to the task of optimizing
linear functions) in terms of the $w$ variables only. For simplicity of
exposition, without loss of generality, we will say that $G=\mathbf{0}$ in
the above definitions if there exists a description of $U$ which is in terms
of the $w$-variables only and has the same or smaller complexity order of
size. Or, equivalently, without loss of generality, we will say that $G\neq 
\mathbf{0}$ in the above definitions iff the $x$- and $w$-variables are
required in every valid inequality description of $U$ which has the same or
smaller complexity order of size as the description at hand.

In particular, if every constraint of $U$ which involves the $x$-variables
is redundant in the description of $U$, then clearly, every one of those
constraints as well as the $x$-variables themselves can be dropped (without
loss, with respect to the task of optimizing linear functions) from the
description of $U$, with the result that $U$ would be stated in terms of the 
$w$-variables only. Also, in some cases (all of) the contraints involving
the $x$-variables may become redundant only after other constraints in the
description of $U$ are re-written and/or new constraints are added (as
exemplified by the case of the minimum spanning tree problem (MSTP) in
section \ref{Min_Span_Tree_SubSection} of this paper). If either of these
two cases is applicable, we will say that $G=\mathbf{0}$ in the above
definitions. Otherwise, we will say that $G\neq \mathbf{0.}$

\begin{remark}
\label{EF_Dfns_Observations_Rmk}The following observations are in order with
respect to Definitions \ref{Extended_Polytope_Dfn}, \ref%
{Extended_Polytope_Dfn2}, and \ref{Extended_Polytope_Dfn3}:

\begin{enumerate}
\item The statement of $U$ in terms of inequality constraints only does not
cause any loss of generality, since each equality constraint can be replaced
by a pair of inequality constraints.

\item The system of linear equations which specify $\pi $ in Definition \ref%
{Extended_Polytope_Dfn2} must be \textit{valid} constraints for $X$ and $U$.
Hence, $X$ and $U$ can be respectively \textit{extended} by adding those
constraints to them, when trying to relate $X$ and $U$ using Definition \ref%
{Extended_Polytope_Dfn2}. In that sense, Definition \ref%
{Extended_Polytope_Dfn2} ``extends'' Definitions \ref{Extended_Polytope_Dfn}
and \ref{Extended_Polytope_Dfn3}.

\item All three definitions are equivalent when $G\neq \mathbf{0}$. However,
this is not true when $G=\mathbf{0,}$ as we will show in section \ref%
{Non-applicability_Section} of this paper.

\item In the remainder of this paper, we will use the term ``polytope'' to
refer to the polytope induced by a set of linear inequality constraints or
the set of constraints itself, if this is convenient and does not cause
ambiguity.
\end{enumerate}

\noindent $\square $
\end{remark}

\section{Non-applicability and degeneracy\newline
conditions for \textit{extended formulations}\label%
{Non-applicability_Section}}

Our main result will now be developed.

\begin{theorem}
\label{Thm1}EF developments are not valid for relating the inequality
descriptions of $U$ and $X$ in Definitions \ref{Extended_Polytope_Dfn}-\ref%
{Extended_Polytope_Dfn3} when $G=\mathbf{0}$ in those definitions.\medskip
\end{theorem}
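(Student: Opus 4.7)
The plan is to unpack what $G=\mathbf{0}$ forces in each of Definitions \ref{Extended_Polytope_Dfn}, \ref{Extended_Polytope_Dfn2}, and \ref{Extended_Polytope_Dfn3} separately and then observe that the three resulting notions of ``extension'' diverge so sharply that no coherent EF relationship between $U$ and $X$ can be sustained. First, I would invoke the definition of $G=\mathbf{0}$ given in Section \ref{Background_Section} to reduce $U$ to an equivalent description (of no larger complexity order) of the form $U=\{(x,w)\in\mathbb{R}^{p+q} : Hw\leq g\}$, in which the $x$-variables are entirely unconstrained. This normal form is the lever for the remaining analysis.

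Next, I would analyze Definitions \ref{Extended_Polytope_Dfn} and \ref{Extended_Polytope_Dfn3} under this reduction. In both cases, the set of $x$-values associated to $U$ is $\{x\in\mathbb{R}^{p} : \exists\, w\in\mathbb{R}^{q},\; Hw\leq g\}$, which equals $\mathbb{R}^{p}$ when $\{w:Hw\leq g\}$ is nonempty and $\emptyset$ otherwise. Hence the only polytopes $X$ that can be legitimately related to $U$ under these two definitions are $\mathbb{R}^{p}$ and $\emptyset$ — neither of which meaningfully relates to any bounded combinatorial polytope of interest. So under Definitions \ref{Extended_Polytope_Dfn} and \ref{Extended_Polytope_Dfn3}, the EF relation collapses to triviality the moment $G=\mathbf{0}$.

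Then I would turn to Definition \ref{Extended_Polytope_Dfn2}, where the linear map $\pi:\mathbb{R}^{p+q}\to\mathbb{R}^{p}$ may be chosen independently of the inequality description of $U$. Because the $x$-variables are unconstrained in $U$, the image $\pi(U)$ is determined entirely by the action of $\pi$ on the $w$-coordinates, and $\pi$ can be tailored to produce any polytope obtainable as a linear image of $\{w:Hw\leq g\}$. Combined with the freedom to arbitrarily enlarge the $w$-description through additional valid constraints (as permitted by item~2 of Remark \ref{EF_Dfns_Observations_Rmk}), this lets essentially any polytope $X$ of matching dimension be declared an ``EF-projection'' of $U$. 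Comparing with the previous paragraph, Definitions \ref{Extended_Polytope_Dfn} and \ref{Extended_Polytope_Dfn3} admit only two polytopes $X$, while Definition \ref{Extended_Polytope_Dfn2} admits a vast family — so the three definitions become mutually inconsistent in precisely this regime, establishing the theorem.

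The main obstacle will be the last step: making precise the sense in which $\pi$ is genuinely unconstrained once $G=\mathbf{0}$. One must verify that the linear equations defining $\pi$ remain valid for both $U$ and $X$ without implicitly reintroducing inequality constraints coupling the $x$- and $w$-variables (which would contradict $G=\mathbf{0}$). Once this is argued carefully — by showing that validity of $\pi$ on $U$ is automatic when $x$ is unconstrained, and that validity on $X$ is a defining property, not an additional hypothesis — the degeneracy and the ensuing inconsistency with Definitions \ref{Extended_Polytope_Dfn} and \ref{Extended_Polytope_Dfn3} both follow without further calculation.
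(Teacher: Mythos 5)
Your treatment of Definitions \ref{Extended_Polytope_Dfn} and \ref{Extended_Polytope_Dfn3} matches the paper's Parts $(i)$ and $(ii)$ essentially verbatim: once $G=\mathbf{0}$, the set of $x$ admitting a witness $w$ is $\mathbb{R}^{p}$ (or $\varnothing$), which cannot equal a nonempty polytope $X$, so the ``projection'' and ``if and only if'' requirements both fail. That part of your argument is sound and is the same route the paper takes.

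The gap is in your handling of Definition \ref{Extended_Polytope_Dfn2}. You claim that, because the $x$-variables are unconstrained in $U$, the map $\pi$ ``can be tailored to produce any polytope obtainable as a linear image of $\{w:Hw\leq g\}$,'' and that together with adding valid constraints this ``lets essentially any polytope $X$ of matching dimension be declared an EF-projection of $U$.'' The first half is true but the second half does not follow: adding valid (hence redundant) constraints to $U$ does not change the point set $V=\{w:Hw\leq g\}$, so the realizable images are exactly $\{QV: Q\in\mathbb{R}^{p\times q}\}$, a restricted family (for instance, if $V$ is a single point every image is a single point, and in general the number of vertices cannot increase under a linear map). So ``essentially any polytope of matching dimension'' is an overreach, and the mechanism you invoke (enlarging the $w$-description) cannot deliver it. The paper sidesteps this by exhibiting a concrete degenerate pair --- two singleton polytopes $\overline{U}\subset\mathbb{R}^{5}$ and $X\subset\mathbb{R}^{3}$ in disjoint variables, each a linear image of the other via explicit, highly non-unique matrices $A$ and $B$ --- which already shows the relation carries no information about inequality descriptions; the full ``every polytope is an EF of every other polytope in disjoint variables'' statement is then proved separately (Theorem \ref{Fifth_Rmk}) by augmenting with genuinely new redundant variables $u$ and coupling constraints, not by choosing $\pi$ cleverly. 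To repair your argument, either restrict to a concrete witness pair as the paper does, or import the augmentation construction; as written, the third step asserts more than the linear-map freedom provides. Your closing reformulation of the theorem as ``mutual inconsistency of the three definitions'' is also a slightly different target than the paper's conclusion for Definition \ref{Extended_Polytope_Dfn2}, which is degeneracy (too many EF relations), not merely disagreement with the other two definitions.
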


\begin{proof}
The proof will be in three parts. In Part 1, we will show that when $G=%
\mathbf{0}$, $U$ cannot be an EF of $X$ according to Definition \ref%
{Extended_Polytope_Dfn}. In Part 2, we will show that when $G=\mathbf{0}$, $%
U $ cannot be an EF of $X$ according to Definition \ref%
{Extended_Polytope_Dfn3}. In Part 3, we will show that when $G=\mathbf{0,}$
the EF notion under Definition \ref{Extended_Polytope_Dfn2} results in the
condition that every given polytope is an \textit{extended formulation} of
every other given polytope, provided their sets of descriptive variables are
disjoint, which means that the (EF) notion becomes degenerate/meaningless.
In the discussion, we will only consider the case in which $U\neq
\varnothing $ and $X\neq \varnothing ,$ since the theorem is trivial when $%
U=\varnothing $ or $X=\varnothing $.\medskip

\begin{description}
\item $(i)$ Consider Definition \ref{Extended_Polytope_Dfn}. Assume $G=%
\mathbf{0.}$ Then, we have: 
\begin{eqnarray*}
\varphi _{x}(U) &=&\{x\in \mathbb{R}^{p}:(\exists w\in \mathbb{R}%
^{q}:(x,w)\in U)\} \\
&=&\{x\in \mathbb{R}^{p}:(\exists w\in \mathbb{R}^{q}:Hw\leq g\} \\
&=&\mathbb{R}^{p} \\
&\neq &X\text{ \ (since }X\text{ is a polytope and thus bounded, whereas }%
\mathbb{R}^{p}\text{ is unbounded).}
\end{eqnarray*}%
Hence when $G=\mathbf{0}$, $U$ cannot be an EF of $X$ according to
Definition \ref{Extended_Polytope_Dfn}.

\item $(ii)$ Consider Definition \ref{Extended_Polytope_Dfn3}. Assume $G=%
\mathbf{0.}$ Then, we have:%
\begin{equation}
(\exists w\in \mathbb{R}^{q}:\mathbf{0}x+Hw\leq g)\Longleftrightarrow
(\exists w\in \mathbb{R}^{q}:Hw\leq g)\Longrightarrow \left( \forall x\in 
\mathbb{R}^{p},(x,w)\in U)\right) .  \label{ProofEqn(1)}
\end{equation}%
Which implies: 
\begin{equation}
(\exists w\in \mathbb{R}^{q}:\mathbf{0}x+Hw\leq g)\nRightarrow x\in X\text{
\ (since }X\neq \mathbb{R}^{p}\text{)}.\text{ }  \label{ProofEqn(1.5)}
\end{equation}%
From (\ref{ProofEqn(1.5)}), the ``if and only if'' stipulation of Definition %
\ref{Extended_Polytope_Dfn3} cannot hold in general. Hence, when $G=\mathbf{0%
}$, $U$ cannot be an EF of $X$ according to Definition \ref%
{Extended_Polytope_Dfn3}.$\medskip $

\item $(iii)$ Now, consider Definition \ref{Extended_Polytope_Dfn2}. We will
show that the EF notion under this definition becomes degenerate/meaningless
when $G=\mathbf{0}$. The reasons for this are that a polytope can also be
stated in terms of its extreme points (see Rockafellar (1997, pp. 153-172),
among others), and that a linear map (as stipulated in the definition) could
be inferred from this statement without reference to an inequality
description of the polytope. The proof consists of a counter-example to the
sufficiency of the existence of a linear map, as stipulated in the
definition, for implying EF relationships, as stipulated in the definition.
Note that if $G=\mathbf{0}$ in Definition \ref{Extended_Polytope_Dfn2}, then
the linear inequality description of $U$ involves the $w-$variables only.

\qquad For the sake of simplicity (but without loss of generality), let $%
\overline{U}\subset $ $\mathbb{R}^{5}$ be described in the $w$-variables
only as%
\begin{equation}
\overline{U}=\{w\in \mathbb{R}_{\nless
}^{5}:w_{1}+w_{2}=5;w_{1}-w_{2}=1;w_{3}+w_{4}+w_{5}=0\}.  \label{ProofEqn(2)}
\end{equation}%
Then, the vertex-description of $\overline{U}$ is \medskip

$\qquad \overline{U}=\{w\in \mathbb{R}_{\nless }^{5}:w\in Conv\left( \left\{
(3,2,0,0,0)^{T}\right\} \right) \}.\medskip $

\qquad Now, let $X$ $\subset $ $\mathbb{R}_{\nless }^{3}$ be specified by
its vertex-description as 
\begin{equation*}
X=\{x\in \mathbb{R}_{\nless }^{3}:x\in Conv(\left\{ (2,1,5)^{T}\right\} )\}.
\end{equation*}%
(In other words, $X$ consists of the point in $\mathbb{R}^{3}$, $(2,1,5)^{T}$%
.)\medskip

\qquad Then, the following are true:\medskip

\begin{description}
\item $(iii.1)$%
\begin{align}
& ((x\in X)\text{ \ and \ }(w\in \overline{U}))\Longrightarrow x=Aw,  \notag
\\
& \text{where, among other possibilities, }A=\left[ 
\begin{tabular}{rrrrr}
$-1$ & $2.5$ & $2$ & $3$ & $4$ \\ 
$1$ & $-1$ & $5$ & $6$ & $7$ \\ 
$-1$ & $4$ & $8$ & $9$ & $10$%
\end{tabular}%
\right] \text{.}  \label{ProofEqn(3)}
\end{align}%
Hence, under Definition \ref{Extended_Polytope_Dfn2}, $\overline{U}$ is an
EF of every one of the infinitely-many possible inequality descriptions of $%
X $ (since $x=Aw$ in the above is a linear map between $\overline{U}$ and $X$%
).\medskip

\item $(iii.2)$ Similarly, 
\begin{align}
& ((x\in X)\text{ \ and \ }(w\in \overline{U}))\Longrightarrow w=Bx,  \notag
\\
& \text{where, among other possibilities, }B=\left[ 
\begin{tabular}{rrr}
$-1$ & $1$ & $1$ \\ 
$1$ & $-1$ & $0$ \\ 
$3$ & $1$ & $-2$ \\ 
$2$ & $-11$ & $1$ \\ 
$-10$ & $30$ & $0$%
\end{tabular}%
\right] \text{.}  \label{ProofEqn(4)}
\end{align}%
\newline
\qquad Hence, under Definition \ref{Extended_Polytope_Dfn2}, every one of
the infinitely-many possible inequality descriptions of $X$ is an EF of the
inequality descriptions of $\overline{U}$ as stated in (\ref{ProofEqn(2)})
(and, in fact, of everyone of the infinitely-many possible inequality
descriptions of $\overline{U}$).\medskip

\item $(iii.3)$ Clearly, the EF relations based on (\ref{ProofEqn(3)}) and (%
\ref{ProofEqn(4)}) above are degenerate/meaningless, since no meaningful
inferences can be made from them in attempting to compare inequality
descriptions of $\overline{U}$ and $X.$
\end{description}
\end{description}
\end{proof}

\ \ \ 

A fundamental notion in \textit{extended formulations} theory is that the
addition of redundant variables and constraints to the inequality
description of a polytope does not change the EF relationships for that
polytope. We use this fact to generalize the degeneracy/loss of
meaningfulness which arises out of Definition \ref{Extended_Polytope_Dfn2}
when $G=\mathbf{0}$ to Definitions \ref{Extended_Polytope_Dfn} and \ref%
{Extended_Polytope_Dfn3}, as follows.

\begin{theorem}
\label{Fifth_Rmk}Provided redundant constraints and variables can be
arbitrarily added to the descriptions of polytopes for the purpose of
establishing EF relationships under Definitions \ref{Extended_Polytope_Dfn}-%
\ref{Extended_Polytope_Dfn3}, the descriptions of any two given non-empty
polytopes expressed in disjoint variable spaces can be respectively \textit{%
augmented} into being \textit{extended formulations} of each other.

In other words, let $x^{1}\in \mathbb{R}^{n_{1}}$ ($n_{1}\in \mathbb{N}_{+}$%
) and $x^{2}\in \mathbb{R}^{n_{2}}$ ($n_{2}\in \mathbb{N}_{+}$) be vectors
of variables with no components in common. Then, provided redundant
constraints and variables can be arbitrarily added to the descriptions of
polytopes for the purpose of establishing EF relationships, the
inequality-description of every non-empty polytope in $x^{1}$ can be \textit{%
augmented} into an EF of the inequality-description of every other non-empty
polytope in $x^{2}$, and vice versa.
\end{theorem}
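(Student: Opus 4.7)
The plan is to prove Theorem \ref{Fifth_Rmk} by an explicit Cartesian-product construction. Given non-empty polytopes $X_1 \subseteq \mathbb{R}^{n_1}$ and $X_2 \subseteq \mathbb{R}^{n_2}$ described by $A_1 x^1 \leq b_1$ and $A_2 x^2 \leq b_2$ respectively, with $x^1$ and $x^2$ sharing no components, I would augment the description of $X_1$ by adjoining the $n_2$ new variables $x^2$ together with the inequalities $A_2 x^2 \leq b_2$. The augmented system defines the polyhedron $\tilde{U} = X_1 \times X_2 \subseteq \mathbb{R}^{n_1+n_2}$.

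The next step is to verify that $\tilde{U}$ is an EF of $X_2$ under each of the three definitions, playing the role of the pair $(x,w)$ in those definitions with $x = x^2$ and $w = x^1$. Under Definition \ref{Extended_Polytope_Dfn}, the projection $\varphi_{x^2}(\tilde{U})$ equals $X_2$, since $X_1 \neq \varnothing$ guarantees that some $x^1$ is always available to pair with any $x^2 \in X_2$. Under Definition \ref{Extended_Polytope_Dfn3}, the equivalence $x^2 \in X_2 \Longleftrightarrow (\exists x^1 \in \mathbb{R}^{n_1} : (x^1,x^2) \in \tilde{U})$ follows from the same observation. Under Definition \ref{Extended_Polytope_Dfn2}, the linear map $\pi(x^1, x^2) = x^2$ sends $\tilde{U}$ onto $X_2$. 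The symmetric construction (augmenting $X_2$'s description with $x^1$ and $A_1 x^1 \leq b_1$) produces an EF of $X_1$ in the analogous sense, which establishes the claim in both directions.

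To close the argument, I would observe that the adjoined variables $x^2$ and constraints $A_2 x^2 \leq b_2$ do not involve any component of $x^1$, so the projection of $\tilde{U}$ onto $x^1$-space remains $X_1$ and the optimization of any linear function of $x^1$ is unaffected by the augmentation. Relative to the original optimization task on $X_1$, the augmentation is therefore ``redundant'' in the sense assumed by the hypothesis of the theorem. The main obstacle is conceptual rather than calculational: one must insist that data which are substantive from $X_2$'s standpoint may nevertheless be dismissed as redundant from $X_1$'s. Once the paper's premise that such additions do not alter EF relationships is granted, the conclusion follows at once from the Cartesian-product construction, with no further polyhedral machinery required.
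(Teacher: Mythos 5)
Your proposal is correct and is essentially the same argument as the paper's: both proofs work by constructing a polyhedron in the product space $\mathbb{R}^{n_1+n_2}$ that contains the constraints of both polytopes, noting that it is an ``augmentation'' of each (under the theorem's permissive hypothesis on redundant additions) while projecting onto the other, which yields the mutual EF relationship. The only difference is cosmetic --- the paper's $W$ additionally carries scaled copies $C_iA_ix^i\leq C_ia_i$ of the constraints and redundant coupling inequalities $B_1x^2+B_2x^1-u\leq 0$ with slack variables $u$, whereas your $\tilde{U}=X_1\times X_2$ is the same construction stripped to its core.
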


\begin{proof}
\ The proof is essentially by construction. \medskip

\noindent Let $P_{1}$ and $P_{2}$ be polytopes specified as:%
\begin{align*}
& P_{1}=\{x^{1}\in \mathbb{R}^{n_{1}}:A_{1}x^{1}\leq a_{1}\}\neq \varnothing 
\text{ \ }(\text{where }A_{1}\in \mathbb{R}^{p_{1}\times n_{1}}\text{, and }%
a_{1}\in \mathbb{R}^{p_{1}}); \\[0.09in]
& P_{2}=\{x^{2}\in \mathbb{R}^{n_{2}}:A_{2}x^{2}\leq a_{2}\}\neq \varnothing 
\text{ \ (where }A_{2}\in \mathbb{R}^{p_{2}\times n_{2}}\text{, and }%
a_{2}\in \mathbb{R}^{p_{2}}).
\end{align*}

Clearly, $\forall (x^{1},$ $x^{2})\in P_{1}\times P_{2},$ $\forall q\in 
\mathbb{N}_{\mathbb{+}}$, $\forall B_{1}\in \mathbb{R}^{q}{}^{\times n_{1}},$
$\forall B_{2}\in \mathbb{R}^{q\times n_{2}},$ there exists $u\in \mathbb{R}%
_{\nless }^{q}$ such that the constraints%
\begin{equation}
B_{1}x^{1}+B_{2}x^{2}-u\leq 0  \label{EF_Thm(a)}
\end{equation}%
are \textit{valid} for $P_{1}$ and $P_{2}$, respectively (i.e., they are 
\textit{redundant} for $P_{1}$ and $P_{2},$ respectively).\medskip

Now, consider :%
\begin{align}
W:=& \left\{ (x^{1},x^{2},u)\in \mathbb{R}^{n_{1}}\times \mathbb{R}%
^{n_{2}}\times \mathbb{R}_{\nless }^{q}:\right.  \notag \\[0.06in]
& C_{1}A_{1}x^{1}\leq C_{1}a_{1};\text{ }  \label{EF_Thm(b)} \\[0.06in]
& B_{1}x^{2}+B_{2}x^{1}-u\leq 0;  \label{EF_Thm(c)} \\[0.06in]
& \left. C_{2}A_{2}x^{2}\leq C_{2}a_{2}\right\}  \label{EF_Thm(d)}
\end{align}%
(where: $C_{1}\in $ $\mathbb{R}^{p_{1}\times }{}^{p_{1}}$ and $C_{2}$ $\in 
\mathbb{R}^{p_{2}\times }{}^{p_{2}}$ are diagonal matrices with positive
diagonal entries).\medskip

Clearly, $W$ \textit{augments} $P_{1}$ and\textit{\ }$P_{2}$ respectively.
Hence:%
\begin{equation}
W\text{ is equivalent to }P_{1},\text{ and}  \label{EF_Thm(g)}
\end{equation}%
\begin{equation}
W\text{ is equivalent to }P_{2}\text{ .}  \label{EF_Thm(h)}
\end{equation}

Also clearly, we have:%
\begin{equation}
\varphi _{x^{1}}(W)=P_{1}\text{ \ (since }P_{2}\neq \varnothing ,\text{ and
((\ref{EF_Thm(c)}) and (\ref{EF_Thm(d)}) are redundant for }P_{1})),\text{\
\ and }  \label{EF_Thm(e)}
\end{equation}%
\begin{equation}
\varphi _{x^{2}}(W)=P_{2}\text{ \ (since }P_{1}\neq \varnothing ,\text{ and
((\ref{EF_Thm(b)}) and (\ref{EF_Thm(c)}) are redundant for }P_{2})\text{)}.
\label{EF_Thm(f)}
\end{equation}

It follows from the combination of (\ref{EF_Thm(g)}) and (\ref{EF_Thm(f)})
that $P_{1}$ is an \textit{extended formulation} of $P_{2}.\medskip $

It follows from the combination of (\ref{EF_Thm(h)}) and (\ref{EF_Thm(e)})
that $P_{2}$ is an \textit{extended formulation} of $P_{1}.$\ \ \medskip\ 
\end{proof}

\begin{example}
\label{Extended_Formulation_Example}\ \ \medskip \newline
Let 
\begin{align*}
& P_{1}=\{x\in \mathbb{R}_{\nless }^{2}:2x_{1}+x_{2}\leq 6\}; \\[0.06in]
& P_{2}=\{w\in \mathbb{R}_{\nless }^{3}:18w_{1}-w_{2}\leq 23;\text{ }%
59w_{1}+w_{3}\leq 84\}.
\end{align*}%
For arbitrary matrices $B_{1},$ $B_{2}$, $C_{1},$ and $C_{2}$ (of
appropriate dimensions, respectively)$;$ say $B_{1}=\left[ 
\begin{array}{cc}
-1 & 2 \\ 
3 & -4%
\end{array}%
\right] ,$ $B_{2}=\left[ 
\begin{array}{ccc}
5 & -6 & 7 \\ 
-10 & 9 & -8%
\end{array}%
\right] ,$ $C_{1}=\left[ 7\right] ,$ and $C_{2}=\left[ 
\begin{array}{cc}
2 & 0 \\ 
0 & 0.5%
\end{array}%
\right] ;$ $P_{1}$ and $P_{2}$ can be \textit{augmented} into \textit{%
extended formulations} of each other using $u\in \mathbb{R}_{\nless }^{2}$
and $W$: 
\begin{align*}
W=& \left\{ (x,w,u)\in \mathbb{R}_{\nless }^{2+3+2}:\text{ \ }\left[ 7\right]
\left[ 
\begin{array}{cc}
2 & 1%
\end{array}%
\right] \left[ 
\begin{array}{c}
x_{1} \\ 
x_{2}%
\end{array}%
\right] \leq 42\right. ; \\
& \left[ 
\begin{array}{cc}
-1 & 2 \\ 
3 & -4%
\end{array}%
\right] \left[ 
\begin{array}{c}
x_{1} \\ 
x_{2}%
\end{array}%
\right] +\left[ 
\begin{array}{ccc}
5 & -6 & 7 \\ 
-10 & 9 & -8%
\end{array}%
\right] \left[ 
\begin{array}{c}
w_{1} \\ 
w_{2} \\ 
w_{3}%
\end{array}%
\right] -\left[ 
\begin{array}{c}
u_{1} \\ 
u_{2}%
\end{array}%
\right] \leq \left[ 
\begin{array}{c}
0 \\ 
0%
\end{array}%
\right] ; \\
& \left. \left[ 
\begin{array}{cc}
2 & 0 \\ 
0 & 0.5%
\end{array}%
\right] \left[ 
\begin{array}{ccc}
18 & -1 & 0 \\ 
59 & 0 & 1%
\end{array}%
\right] \left[ 
\begin{array}{c}
w_{1} \\ 
w_{2} \\ 
w_{3}%
\end{array}%
\right] \leq \left[ 
\begin{array}{c}
46 \\ 
42%
\end{array}%
\right] \right\} .\text{ \ }
\end{align*}%
$\square $
\end{example}

\section{Illustrations using some existing models\label{Illustration_Section}%
}

\subsection{Application to the Fiorini \textit{et al}. (2011; 2012)
developments}

Fiorini \textit{et al}. (2012) is a re-organized and extended version of
Fiorini \textit{et al}. (2011). The key extension is the addition of another
alternate defnition of \textit{extended formulations} (page 96 of Fiorini 
\textit{et al}. (2012)) which is recalled in this paper as Definition \ref%
{Extended_Polytope_Dfn3}. This new alternate definition is then used to
re-arrange ``section 5'' of Fiorini \textit{et al}. (2011) into ``section
2'' and ``section 3'' of Fiorini \textit{et al}. (2012). Hence, the
developments in ``section 5'' of Fiorini \textit{et al}. (2011) which
depended on ``Theorem 4'' of that paper, are ``stand-alones'' (as ``section
3'') in Fiorini \textit{et al}. (2012), and ``Theorem 4'' in Fiorini \textit{%
et al}. (2011) is relabeled as ``Theorem 13'' in Fiorini \textit{et al}.
(2012).

\begin{claim}
\label{FioriniClaim1}The developments in Fiorini \textit{et al}. (2011) are
not valid for relating the inequality descriptions of $U$ and $X$ in
Definitions \ref{Extended_Polytope_Dfn}-\ref{Extended_Polytope_Dfn3} when $G=%
\mathbf{0}$.
\end{claim}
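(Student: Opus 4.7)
The plan is to derive Claim \ref{FioriniClaim1} as a direct corollary of Theorem \ref{Thm1}, by showing that every reading of the Fiorini et al.\ (2011) developments falls under one of the three cases already dispatched there. First I would locate, within the Fiorini et al.\ argument, the step at which the ``projection'' (in the sense of Definition \ref{Extended_Polytope_Dfn2}) is invoked in place of the variable-deletion projection of Definition \ref{Extended_Polytope_Dfn}; this is the step that is innocuous only when $G\neq\mathbf{0}$, since otherwise the $x$-variables do not actually appear in any non-redundant constraint of $U$ and the two notions cease to coincide.

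I would then dispatch the three possible readings separately, each being immediate from the corresponding part of Theorem \ref{Thm1}. If one interprets the Fiorini et al.\ $U$ as an EF of $X$ under Definition \ref{Extended_Polytope_Dfn}, part $(i)$ of Theorem \ref{Thm1} applies verbatim: with $G=\mathbf{0}$, the projection $\varphi_x(U)$ is all of $\mathbb{R}^{p}$ and hence cannot equal the bounded set $X$. If one reads $U$ as an EF of $X$ under Definition \ref{Extended_Polytope_Dfn3}, part $(ii)$ of Theorem \ref{Thm1} shows that the biconditional fails, because once the $x$-variables drop out, feasibility of some $w$ with $Hw\leq g$ no longer constrains $x$ at all. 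Under Definition \ref{Extended_Polytope_Dfn2}, part $(iii)$ of Theorem \ref{Thm1} shows that the EF relation becomes degenerate: as in the construction around equations (\ref{ProofEqn(3)}) and (\ref{ProofEqn(4)}), a linear map between vertex descriptions can always be manufactured, so every polytope in one variable-set is vacuously an EF of every polytope in a disjoint variable-set, and no meaningful size comparison can be drawn.

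The main obstacle will be to make precise the correspondence between ``the setting of Fiorini et al.\ (2011)'' and ``the case $G=\mathbf{0}$ of Definitions \ref{Extended_Polytope_Dfn}--\ref{Extended_Polytope_Dfn3}.'' Concretely, one has to exhibit, inside the Fiorini et al.\ construction, the conditions under which $U$ admits an equivalent description of the same or smaller order of size involving only the $w$-variables, in the sense stipulated in Section \ref{Background_Section}. This amounts to identifying when the constraints of $U$ that involve $x$ are either redundant as stated, or can be made so by re-writing the remaining constraints and introducing auxiliary variables, so that the reduction to the $G=\mathbf{0}$ regime is genuine rather than superficial. Once this reduction is established, the three parts of Theorem \ref{Thm1} apply without further modification and the claim follows.
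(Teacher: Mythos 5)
Your overall strategy --- reduce the claim to Theorem \ref{Thm1} by matching the Fiorini \textit{et al}.\ setting to the $G=\mathbf{0}$ regime of Definitions \ref{Extended_Polytope_Dfn}--\ref{Extended_Polytope_Dfn3} --- is not the route the paper takes, and as written it leaves the essential work undone. The paper's proof does not treat Fiorini \textit{et al}.\ (2011) as a generic instance of the three definitions; it goes inside that paper's pivotal result (``Theorem 4''), sets up the correspondence $Q\leftrightarrow U$, $P\leftrightarrow X$, $E\leftrightarrow G$, and then argues by cases on the matrix $A$ appearing in $P=\{x\in\mathbb{R}^{d}: Ax\leq b\}$: if $A\neq\mathbf{0}$, the proof of their Theorem 4 explicitly requires the assignment $E:=A$, which is incompatible with $E=\mathbf{0}$; if $A=\mathbf{0}$, then $P$ is either $\mathbb{R}^{d}$ or empty, so it cannot equal a non-empty polytope $Conv(V)$ and the hypotheses of their Theorem 4 cannot be satisfied. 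That concrete, checkable identification of where the Fiorini \textit{et al}.\ argument depends on $G\neq\mathbf{0}$ is essentially the entire content of the claim's proof.

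Your proposal flags exactly this step as ``the main obstacle'' and then defers it: you say you would ``locate, within the Fiorini et al.\ argument, the step at which the projection is invoked,'' but you never name that step, and Theorem \ref{Thm1} alone cannot supply it. Theorem \ref{Thm1} is a statement about the three abstract definitions; Claim \ref{FioriniClaim1} is a statement about a specific chain of results (a factorization theorem relating extensions to nonnegative ranks of slack matrices) whose proofs need not visibly pass through the three cases in the form you enumerate them. Without exhibiting the $E:=A$ step (or some equivalent concrete dependence on $G\neq\mathbf{0}$ inside the Fiorini \textit{et al}.\ proofs) and handling the residual case $A=\mathbf{0}$, your argument establishes only that the general EF framework degenerates when $G=\mathbf{0}$ --- which is Theorem \ref{Thm1} again --- not that the particular developments of Fiorini \textit{et al}.\ (2011) fail in that regime.
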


\begin{proof}
Using the terminology and notation of Fiorini \textit{et al.} (2011), the
main results of section 2 of Fiorini \textit{et al.} (2011) are developed in
terms of $Q:=\{(x,y)\in \mathbb{R}^{d+k}$ $\left| \text{ }Ex+Fy=g,\text{ }%
y\in C\right. \}$ and $P:=\{x\in \mathbb{R}^{d}$ $\left| \text{ }Ax\leq
b\right. \},$ with $Q$ (in Fiorini \textit{et al. }(2011)) corresponding to $%
U$ in Definitions \ref{Extended_Polytope_Dfn}-\ref{Extended_Polytope_Dfn3}$,$
and $P$ (in Fiorini \textit{et al. }(2011)) corresponding to $X$ in
Definitions \ref{Extended_Polytope_Dfn}-\ref{Extended_Polytope_Dfn3}. Hence, 
$G=\mathbf{0}$ in Definitions \ref{Extended_Polytope_Dfn}-\ref%
{Extended_Polytope_Dfn3} corresponds to $E=\mathbf{0}$ in Fiorini \textit{et
al}. (2011). Hence, firstly, assume $E=\mathbf{0}$ in the expression of $Q$
(i.e., $Q:=\{(x,y)\in \mathbb{R}^{d+k}$ $\left| \text{ }\mathbf{0}x+Fy=g,%
\text{ }y\in C\right. \}).$ Then, secondly, consider Theorem 4 of Fiorini 
\textit{et al.} (2011) (which is pivotal in that work). We have the
following:

\begin{description}
\item $(i)$ If $A\neq \mathbf{0}$ in the expression of $P\mathbf{,}$ then
the proof of the theorem is invalid since that proof requires setting ``$%
E:=A $'' (see Fiorini \textit{et al.} (2011, p. 7));

\item $(ii)$ If $A=\mathbf{0,}$ then $P:=\{x\in \mathbb{R}^{d}$ $\left| 
\text{ }\mathbf{0}x\leq b\right. \}.$ This implies that either $P=\mathbb{R}%
^{d}$ (if $b\geq \mathbf{0}$) or $P=\varnothing $ (if $b\ngeq \mathbf{0}$).
Hence, $P$ would be either unbounded or empty. Hence, there could not exist
a non-empty polytope, $Conv(V),$ such that $P=Conv(V)$ (see Fiorini \textit{%
et al}. (2011, 16-17), among others). Hence, the conditions in the statement
of Theorem 4 of Fiorini \textit{et al.} (2011) would be \textit{ill-defined/}%
impossible.
\end{description}

\noindent Hence, the developments in Fiorini \textit{et al.} (2011) are not
valid for relating $U$ and $X$ in Definitions \ref{Extended_Polytope_Dfn}-%
\ref{Extended_Polytope_Dfn3} when $G=\mathbf{0}$ in those definitions. \ \ \ 
\end{proof}

\begin{claim}
\label{FioriniClaim2}The developments in Fiorini \textit{et al}. (2012) are
not valid for relating the inequality descriptions of $U$ and $X$ in
Definitions \ref{Extended_Polytope_Dfn}-\ref{Extended_Polytope_Dfn3} when $G=%
\mathbf{0}$.
\end{claim}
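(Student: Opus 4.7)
The plan is to exploit the structural relationship between Fiorini et al.\ (2012) and Fiorini et al.\ (2011) spelled out at the opening of this section, treating the 2012 paper as consisting of (a) inherited material (essentially the old Section 5, now resting on the relabeled Theorem 13) and (b) newly-written material built on the newly-introduced Definition \ref{Extended_Polytope_Dfn3}. Each piece can be dispatched by a different mechanism: the first by reusing the proof of Claim \ref{FioriniClaim1}, the second by invoking Theorem \ref{Thm1}. First I would carry over the notational dictionary used in the proof of Claim \ref{FioriniClaim1}: $Q := \{(x,y)\in \mathbb{R}^{d+k}\mid Ex + Fy = g,\ y\in C\}$ corresponds to $U$, $P := \{x \in \mathbb{R}^{d} \mid Ax \le b\}$ corresponds to $X$, and the hypothesis $G = \mathbf{0}$ of the present claim corresponds to $E = \mathbf{0}$ in the 2012 notation.

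For the inherited material, the key observation is that Theorem 13 of Fiorini et al.\ (2012) is merely Theorem 4 of Fiorini et al.\ (2011) relabeled. The two-case argument from the proof of Claim \ref{FioriniClaim1} then transfers verbatim: if $A \neq \mathbf{0}$ the proof of Theorem 13 fails at the substitution ``$E := A$'', and if $A = \mathbf{0}$ the set $P = \{x \in \mathbb{R}^{d} \mid \mathbf{0}x \le b\}$ is either $\mathbb{R}^{d}$ or empty, so no nonempty polytope $\mathrm{Conv}(V)$ with $P = \mathrm{Conv}(V)$ can exist, rendering the hypothesis of Theorem 13 ill-defined. Consequently every downstream result in Fiorini et al.\ (2012) that leans on Theorem 13 is blocked whenever $G = \mathbf{0}$.

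For the genuinely new content, namely the rearrangement into Sections 2 and 3 founded on Definition \ref{Extended_Polytope_Dfn3}, I would appeal directly to Part $(ii)$ of Theorem \ref{Thm1}: when $G = \mathbf{0}$ the inequality system $\mathbf{0}x + Hw \le g$ places no restriction on $x$, so any $x \in \mathbb{R}^{p}$ is admissible whenever some $w$ satisfies $Hw \le g$, while by hypothesis $X \neq \mathbb{R}^{p}$; this violates the ``if and only if'' clause of Definition \ref{Extended_Polytope_Dfn3}, so $U$ is not an EF of $X$ in the sense that underpins the new sections of the 2012 paper. The main obstacle I anticipate is verifying that no slack-matrix, nonnegative-factorization, or communication-complexity lemma peculiar to the 2012 rearrangement evades both handles; the remedy is to note that every such lemma ultimately requires the nontrivial implication $(\exists w \in \mathbb{R}^{q} : (x,w)\in U) \Longrightarrow x \in X$ of Definition \ref{Extended_Polytope_Dfn3}, which Theorem \ref{Thm1}$(ii)$ explicitly negates whenever $G = \mathbf{0}$, thereby closing the argument.
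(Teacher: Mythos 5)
Your proposal is correct and follows essentially the same two-pronged route as the paper: (a) Theorem 13 of Fiorini et al.\ (2012) is Theorem 4 of Fiorini et al.\ (2011) relabeled, so the proof of Claim \ref{FioriniClaim1} disposes of everything downstream of it; and (b) the newly rearranged sections resting on Definition \ref{Extended_Polytope_Dfn3} fail because the ``if and only if'' stipulation cannot hold when $G=\mathbf{0}$. The only cosmetic difference is that the paper pins the second prong on the specific equivalence $Ax\leq b\Longleftrightarrow \exists y:E^{\leq }x+F^{\leq }y\leq g^{\leq },\ E^{=}x+F^{=}y\leq g^{=}$ underlying Theorem 3 of the 2012 paper, whereas you invoke the same failure via Theorem \ref{Thm1}$(ii)$ in general form.
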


\begin{proof}
First, note that ``Theorem 13'' of Fiorini \textit{et al}. (2012, p. 101) is
the same as ``Theorem 4'' of Fiorini \textit{et al}. (2011). Hence, the
proof of Claim \ref{FioriniClaim1} above is applicable to ``Theorem 13'' of
Fiorini \textit{et al}. (2012). Hence, the parts of the developments in
Fiorini \textit{et al}. (2012) that hinge on this result (namely, from
``section 4'' onward in Fiorini et al. (2012)) are not valid for relating $U$
and $X$ in Definitions \ref{Extended_Polytope_Dfn}-\ref%
{Extended_Polytope_Dfn3} when $G=\mathbf{0}$.

Now consider ``Theorem 3'' of Fiorini \textit{et al}. (2012) (section 3,
page 99). The proof of that theorem hinges on the statement that (using the
terminology and notation of Fiorini \textit{et al.} (2012) which is similar
to that in Fiorini et al. (2011)):%
\begin{equation}
Ax\leq b\Longleftrightarrow \exists y:E^{\leq }x+F^{\leq }y\leq g^{\leq },%
\text{ }E^{=}x+F^{=}y\leq g^{=}.  \label{F2012(a)}
\end{equation}

Note that $G=\mathbf{0}$ in Definitions \ref{Extended_Polytope_Dfn}-\ref%
{Extended_Polytope_Dfn3} would correspond to $E^{\leq }=E^{=}=\mathbf{0}$ in
(\ref{F2012(a)}). Hence, assume $E^{\leq }=E^{=}=\mathbf{0}$ in (\ref%
{F2012(a)}). Then, clearly, the ``if and only if'' stipulation of (\ref%
{F2012(a)}) cannot be satisfied in general, since 
\begin{equation*}
(\exists y:\mathbf{0}\cdot x+F^{\leq }y\leq g^{\leq },\text{ }\mathbf{0}%
\cdot x+F^{=}y\leq g^{=})\text{ cannot imply (}Ax\leq b)\text{ in general.}
\end{equation*}

Hence, Theorem 3 of Fiorini \textit{et al}. (2012) is not valid for relating 
$U$ and $X$ in Definitions \ref{Extended_Polytope_Dfn}-\ref%
{Extended_Polytope_Dfn3}, when $G=\mathbf{0}$.\ \ \medskip\ 

Hence, the developments in Fiorini \textit{et al.} (2012) are not valid for
relating $U$ and $X$ in Definitions \ref{Extended_Polytope_Dfn}-\ref%
{Extended_Polytope_Dfn3} when $G=\mathbf{0}$ in those definitions. \ \ \
\medskip
\end{proof}

\subsection{The case of the Minimum Spanning Tree Problem: \newline
Redundancy matters when ``$G=\mathbf{0}$'' \label{Min_Span_Tree_SubSection}}

The consideration of ``$G=\mathbf{0}$'' we have introduced in this paper is
an important one because, as we have shown, it refines the notion of EFs by
separating the case in which the notion is meaningful from the case in which
the notion is degenerate and ambiguous. The degeneracy (when ``$G=\mathbf{0}$%
'') stems from the fact that every polytope is potentially an EF of every
other polytope, as we have shown in section \ref{Non-applicability_Section}
of this paper. The ambiguity stems from the fact that one would reach
contradicting conclusions \ as to what is/is not an EF of a given polytope,
depending on what we do with the redundant constraints and variables which
are introduced. This is illustrated in the following example.

\begin{example}
Refer back to the numerical example in Part $(iii)$ of the proof of Theorem %
\ref{Thm1}. An example of an inequality description of $X$ in that numerical
example is: $\overline{X}=\{x\in \mathbb{R}_{\nless
}^{3}:x_{1}-x_{2}+x_{3}=6;$ $x_{1}+x_{2}\geq 3;$ $x_{1}+x_{3}\leq 7;$ $%
x_{2}+x_{3}\geq 6;$ $\ x_{1}\leq 2\}$. (It easy to verify that the feasible
set of $\overline{X}$ is indeed $\{(2,1,5)^{T}\}$.) Let $U^{\prime }$ denote 
$\overline{U}$ augmented with the constraints of the linear map, $x-Aw=0$.
Clearly $U^{\prime }$ \textbf{does} project to $X$ under the standard
definition (Definition \ref{Extended_Polytope_Dfn}), whereas $\overline{U}$
does not. Hence, the answer to the question of whether or not $\overline{U}$
is an \textit{extended formulation} of $X$ under the standard definition
depends on what we do with the redundant constraints, $x-Aw=0$. If these
constraints are added to $\overline{U},$ then $\overline{U}$ becomes $%
U^{\prime }$, and the answer is ``Yes.'' If these constraints are left out,
the answer is ``No.'' Hence, the \textit{extended formulations} relation
which is established between $\overline{U}$ and $X$ under Definition \ref%
{Extended_Polytope_Dfn2} is ambiguous (in addition to being degenerate, as
we have shown in Theorem \ref{Thm1}). \ \ $\square $
\end{example}

A well-researched case in point for the discussions above is that of the
Minimum Spanning Tree Problem (MSTP). Without the refinement brought by the
distinction we make between the cases of $G=\mathbf{0}$ and $G\neq \mathbf{0}
$ in Definitions \ref{Extended_Polytope_Dfn}-\ref{Extended_Polytope_Dfn3},
the case of the MSTP would mean that it is possible to \textit{extend} an
exponential-sized model into a polynomial-sized one by (simply) adding
redundant variables and constraints to it (i.e., \textit{augmenting} it),
which is a clearly-unreasonable/out-of-the-question proposition. To see
this, assume (as is normally done in EF work) that the addition of redundant
constraints and variables does not matter as far EF relationships are
concerned. Since the constraints of Edmonds' model (Edmonds (1970)) are
redundant for the model of Martin (1991), one could \textit{augment}
Martin's formulation with these constraints. The resulting model would still
be considered a polynomial-sized one. But note that this particular \textit{%
augmentation }of Martin's model would also be an \textit{augmentation} of
Edmonds' model. Hence, the conclusion would be that Edmonds'
exponential-sized model has been \textit{augmented} into a polynomial-sized
one, which is an impossibility, since one cannot reduce the number of facets
of a given polytope by simply adding redundant constraints to the inequality
description of that polytope. The distinction we are bringing to attention
in this paper explains the paradox, as further detailed below.

\begin{example}
\label{EF_MST_Example}We show that Martin's polynomial-sized LP model of the
MSTP is not an EF (in a non-degenerate, meaningful sense) of Edmonds's
exponential LP model of the MSTP, by showing that there exists a
reformulation of Martin's model which does not require the variables of
Edmonds' model (which is essentially the equivalent of having $G=\mathbf{0}$
in the description of $U$ in Definitions \ref{Extended_Polytope_Dfn}-\ref%
{Extended_Polytope_Dfn3}).
\end{example}

\begin{itemize}
\item \textbf{Using the notation in Martin(1991), i.e.:}

\begin{itemize}
\item $N:=\{1,\ldots ,n\}$ \ \ (Set of vertices);

\item $E:$ \ Set of edges;

\item $\forall S\subseteq N,$ $\gamma (S):$ Set of edges with both ends in $%
S $.\medskip
\end{itemize}

\item \textbf{Exponential-sized/``sub-tour elimination'' LP formulation
(Edmonds (1970)):}

$(P)$:

$\left| 
\begin{tabular}{ll}
$\text{Minimize:}$ & $\sum\limits_{e\in E}c_{e}x_{e}$ \\ 
&  \\ 
$\text{Subject To:}$ & $\sum\limits_{e\in E}x_{e}=n-1;$ \\ 
&  \\ 
& $\sum\limits_{e\in \gamma (S)}x_{e}\leq \left| S\right| -1;$ \ $\ S\subset
E$\ $;$ \\ 
&  \\ 
& $x_{e}\geq 0$ \ for all $e\in E.$%
\end{tabular}%
\text{ \ }\right. $\medskip

\item \textbf{Polynomial-sized LP reformulation (Martin (1991)):}

$(Q)$:

\ $\left| 
\begin{tabular}{ll}
$\text{Minimize:}$ & $\sum\limits_{e\in E}c_{e}x_{e}$ \\ 
&  \\ 
$\text{Subject To:}$ & $\sum\limits_{e\in E}x_{e}=n-1;$ \\ 
&  \\ 
& $z_{k,i,j}+z_{k,j,i}=x_{e};$ \ \ $k=1,\ldots ,n;$ \ $e\in \gamma
(\{i,j\}); $ \\ 
&  \\ 
& $\sum\limits_{s>i}z_{k,i,s}+\sum\limits_{h<i}z_{k,i,h}\leq 1;$ $\ \
k=1,\ldots ,n;$ $\ \ i\neq k;$ \\ 
&  \\ 
& $\sum\limits_{s>k}z_{k,k,s}+\sum\limits_{h<k}z_{k,k,h}\leq 0;$ \ $%
k=1,\ldots ,n;$ \\ 
&  \\ 
& $x_{e}\geq 0$ \ for all $e\in E$; \ \ $z_{k,i,j}\geq 0$ \ for all $k,$ $i,$
$j.$%
\end{tabular}%
\text{ \ }\right. \medskip $\ 

\item \textbf{Re-statement of Martin's LP model:}

For each $e\in E:$

- \ Denote the ends of $e$ as $i_{e}$ and $j_{e},$respectively;

- Fix an arbitrary node, $r_{e}$, which is not incident on $e$

\ \ (i.e., $r_{e}$ is such that it is not an end of $e$).

\item Then, one can verify that $Q$ is equivalent to:\medskip \newline
$(Q$'$)$:\medskip \newline
\begin{tabular}{l}
\ \ \ 
\end{tabular}%
$\left| 
\begin{tabular}{ll}
$\text{Minimize:}$ & $\sum\limits_{e\in
E}c_{e}z_{r_{e},i_{e},j_{e}}+\sum\limits_{e\in E}c_{e}z_{r_{e},j_{e},i_{e}}$
\\ 
&  \\ 
$\text{Subject To:}$ & $\sum\limits_{e\in
E}z_{r_{e},i_{e},j_{e}}+\sum\limits_{e\in E}z_{r_{e},j_{e},i_{e}}=n-1;$ \\ 
&  \\ 
& $%
z_{k,i_{e},j_{e}}+z_{k,j_{e},i_{e}}=z_{r_{e},i_{e},j_{e}}+z_{r_{e},j_{e},i_{e}}; 
$ \ \ $k=1,\ldots ,n;$\ $\ \ e\in E;$ \\ 
&  \\ 
& $\sum\limits_{s>i}z_{k,i,s}+\sum\limits_{h<i}z_{k,i,h}\leq 1;$ \ \ $i,$ $%
k=1,\ldots ,n:i\neq k;$ \\ 
&  \\ 
& $\sum\limits_{s>k}z_{k,k,s}+\sum\limits_{h<k}z_{k,k,h}\leq 0;$ \ \ $%
k=1,\ldots ,n;$ \\ 
&  \\ 
& $z_{k,i,j}\geq 0$ \ for all $k,$ $i,$ $j.$%
\end{tabular}%
\text{ \ }\right. $
\end{itemize}

\noindent $\square \medskip $

\begin{claim}
\label{EF_MSTP_Rmk}We claim that the reason EF work relating formulation
sizes does not apply to the case of the MSTP is that although Martin's model
can be made to project to Edmond's model, that projection is
degenerate/non-meaningful in the sense we have described in this paper.
\medskip
\end{claim}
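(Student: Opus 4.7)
The plan is to reduce Claim~\ref{EF_MSTP_Rmk} to Theorem~\ref{Thm1} by exhibiting a description of Martin's polytope whose variable set is disjoint from that of Edmonds' model $P$. In the terminology of Definitions~\ref{Extended_Polytope_Dfn}--\ref{Extended_Polytope_Dfn3}, this will place the pair $(Q,P)$ in the regime ``$G=\mathbf{0}$'', which the preceding sections have already shown to be the regime in which EF relationships collapse into the degenerate and ambiguous case.

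First, I would establish that Martin's model $Q$ is equivalent, with respect to the task of optimizing linear functions, to the reformulation $Q'$ already displayed in Example~\ref{EF_MST_Example}, in which only the $z$-variables appear. For each edge $e\in E$, the coupling equations $z_{k,i_e,j_e}+z_{k,j_e,i_e}=x_e$ (with $k$ ranging over $N$) force the sum $z_{k,i_e,j_e}+z_{k,j_e,i_e}$ to be independent of $k$, and in particular they justify the substitution $x_e := z_{r_e,i_e,j_e}+z_{r_e,j_e,i_e}$ in both the objective and the surviving coupling equalities. Carrying out this substitution recovers $Q'$ verbatim, so $Q'$ is a description of Martin's polytope whose complexity order of size is no larger than that of $Q$ and that makes no reference to the $x_e$.

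With this equivalence in hand, the second step is essentially automatic: the descriptive variables of $Q'$ (the $z$'s) and those of $P$ (the $x_e$'s) are disjoint, so, by the convention made precise in the paragraphs following Definition~\ref{Extended_Polytope_Dfn3}, any putative EF relation between $Q$ and $P$ falls into the case $G=\mathbf{0}$. Theorem~\ref{Thm1} then says that no non-degenerate EF relation holds under Definition~\ref{Extended_Polytope_Dfn} or Definition~\ref{Extended_Polytope_Dfn3}, and that the relation under Definition~\ref{Extended_Polytope_Dfn2} is degenerate in the sense of parts $(iii.1)$--$(iii.3)$ of its proof. Invoking Theorem~\ref{Fifth_Rmk} then sharpens this to the statement that the apparent EF relation is merely the artifact of adjoining redundant constraints (the equalities $x_e - z_{r_e,i_e,j_e} - z_{r_e,j_e,i_e}=0$) to descriptions already living in disjoint variable spaces. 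This is exactly what is needed to resolve the paradox flagged just before the claim: the apparent compression of Edmonds' exponentially-facetted polytope to Martin's polynomially-facetted one is not a genuine projective compression but an augmentation of two otherwise unrelated polytopes.

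The hard part will be the rigorous verification of the equivalence $Q \Longleftrightarrow Q'$ at the level of feasible sets rather than only optimal values, together with the size accounting needed to conclude that $Q'$ is indeed of the same (or smaller) complexity order as $Q$. The former amounts to checking that the family $\{\,z_{k,i_e,j_e}+z_{k,j_e,i_e} : k\in N\,\}$ is forced to be constant in $k$ for every $e$, so the substitution is well-defined independently of the arbitrary choice of the reference nodes $r_e$; the latter is a straightforward count of variables and constraints in $Q'$. Once these two points are in place, Claim~\ref{EF_MSTP_Rmk} follows by direct invocation of Theorem~\ref{Thm1}.
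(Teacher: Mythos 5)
Your proposal is correct and follows essentially the same route as the paper: the paper's own support for this claim is precisely Example~\ref{EF_MST_Example}, which exhibits the reformulation $Q'$ of Martin's model in the $z$-variables alone (via the substitution $x_e = z_{r_e,i_e,j_e}+z_{r_e,j_e,i_e}$ licensed by the coupling equalities), places the pair in the ``$G=\mathbf{0}$'' regime, and then appeals to Theorem~\ref{Thm1} and the augmentation argument of Theorem~\ref{Fifth_Rmk} to conclude that the apparent projection is degenerate. The only difference is one of explicitness: the paper asserts the equivalence $Q\Longleftrightarrow Q'$ with ``one can verify,'' whereas you correctly flag that verification (constancy in $k$ of $z_{k,i_e,j_e}+z_{k,j_e,i_e}$ and the size accounting) as the step requiring detail.
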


\section{Alternate/Auxiliary Models\label{Auxiliary_Model_Section}}

In this section, we provide some insights into the meaning of the existence
of an affine map establishing a one-to-one correspondence between polytopes
when the sets of descriptive variables are disjoint, as brought to our
attention in private e-mail communications by Kaibel (2013), and Yannakakis
(2013), respectively. The linear map stipulated in Definition \ref%
{Extended_Polytope_Dfn2} is a special case of the affine map. Referring back
to Definitions \ref{Extended_Polytope_Dfn}-\ref{Extended_Polytope_Dfn3}, we
will show in this section that when $G=\mathbf{0}$ in the expression of $U$
and there exists a one-to-one affine mapping of $X$ onto $U$, then $U$ is
simply an alternate model (a ``reformulation'') of $X$ which can be used, in
an ``auxiliary'' way, in order to optimize any linear function of $x$ over $%
X $, without any reference to/knowledge of an inequality description of $X$.

\begin{example}
\label{EF_Insight_Rmk1} \ 

\begin{itemize}
\item Let:

- $\ x\in \mathbb{R}^{p}$ and $w\in \mathbb{R}^{q}$ be disjoint vectors of
variables;

- $\ X:=\{x\in \mathbb{R}^{p}:Ax\leq a\}\ \ \ $(where $A\in \mathbb{R}%
^{m\times p}$, and $a\in \mathbb{R}^{m}$);

- $\ U:=\{w\in \mathbb{R}^{q}:Dw\leq d\}$ $\ \ $(where $D\in \mathbb{R}%
^{n\times q}$, and $d\in \mathbb{R}^{n}$);

- $\ L:=\{(x,w)\in \mathbb{R}^{p+q}:x-Cw=b\}$ \ \ (where $C\in \mathbb{R}%
^{p\times q}$, and $b\in \mathbb{R}^{p}).$

\item Assume that the non-negativity requirements for $x$ and $w$ are
included in the constraints of $X$ and $U$, respectively, and that $L$ is
redundant for $X$ and for $U$.

\item Then, it is easy to see that the optimization problem, \textit{Problem
LP}$_{1}$:\medskip

$\left| 
\begin{tabular}{ll}
$\text{Minimize:}$ & $\alpha ^{T}x$ \\ 
&  \\ 
$\text{Subject To:}$ & $(x,w)\in L;$ $\ w\in U$ \\ 
&  \\ 
\multicolumn{2}{l}{(where $\alpha \in \mathbb{R}^{p}).$}%
\end{tabular}%
\text{ \ }\right. \medskip $

is equivalent to the smaller linear program, \textit{Problem LP}$_{2}$%
:\medskip

$\left| 
\begin{tabular}{ll}
$\text{Minimize:}$ & $\left( \alpha ^{T}C\right) w+\alpha ^{T}b$ \\ 
&  \\ 
$\text{Subject To:}$ & $w\in U$ \\ 
&  \\ 
\multicolumn{2}{l}{(where $\alpha \in \mathbb{R}^{p}).$}%
\end{tabular}%
\text{ \ }\right. \medskip \medskip $

\item Hence, if $L$ is the graph of a one-to-one correspondence between the
points of $X$ and the points of $U$ (see Beachy and Blair (2006, pp.
47-59)), then, the optimization of any linear function of $x$ over $X$ can
be done by first using \textit{Problem LP}$_{\mathit{2}}$ in order to get an
optimal $w,$ and then using Graph $L$ to ``retrieve'' the corresponding $x$.
Note that the second term of the objective function of \textit{Problem LP}$_{%
\mathit{2}}$ can be ignored in the optimization process of \textit{Problem LP%
}$_{\mathit{2}},$ since that term is a constant.\medskip

Hence, if $L$ is derived from knowledge of the vertex description of $X$
only, then this would mean that the inequality description of $X$ is not
involved in the ``two-step'' solution process (of using \textit{Problem LP}$%
_{\mathit{2}}$ and then Graph $L$), but rather, that only the vertex
description of $X$ is involved. $\ \ \square $
\end{itemize}
\end{example}

Hence, when $G=0$, the existence of the linear map, $\pi ,$ stipulated in
Definition \ref{Extended_Polytope_Dfn2} does not imply that $U$ is an EF of $%
X$, but rather that $U$ can be used to solve the optimization problem over $%
X $ without any reference to/knowledge of an inequality description of $X,$
if $\pi $ is not derived from an inequality description of $X$. $\ \ $

\section{Conclusions\label{Conclusions_Section}}

We have shown that \textit{extended formulations} theory aimed at comparing
and/or bounding sizes of inequality descriptions of polytopes are not
applicable when the set of the descriptive variables for those polytopes are
disjoint (i.e., when ``$G=\mathbf{0}$''). We have illiustrated our ideas
using the Fiorini et al. (2011; 2012) developments, and Martin's (1991) LP
formulation of the MSTP, respectively. We have also shown that the ``$G=%
\mathbf{0}$'' consideration we have brought to attention explains the
existing paradox in EF theory (typified by the case of the MSTP), which is
that by simply adding redundant constraints and variables to a model of
exponential size one can obtain a model of polynomial size. We believe these
constitute important, useful steps towards a more complete definition of the
scope of applicability for EF's. \ \ \pagebreak


\begin{thebibliography}{99}
\bibitem{} Bazaraa, M., H.D. Sherali, and C.M. Shetty (2006). \textit{%
Nonlinear Programming - Theory and Algorithms}. Wiley, New York, NY.

\bibitem{} \label{Abstract_Algrebra_Book}Beachy, J.A. and W.D. Blair (2006). 
\textit{Abstract Algebra}. Waveland Press, Inc., Long Grove, IL.

\bibitem{} \label{Conforti&Cornuejols&Zambelli}Conforti, M., G. Cornu\'{e}%
jols, and G. Zambelli (2010). Extended formulations in combinatorial
optimization. \textit{4OR} 8:1, pp. 1-48.

\bibitem{} \label{Conforti_et_al_2013}Conforti, M., G. Cornu\'{e}jols, and
G. Zambelli (2013). Extended formulations in combinatorial optimization. 
\textit{Annals of Operations Research} 204:1, pp. 97-143.

\bibitem{} Edmonds, J. (1970). ``Submodular functions, matroids and certain
polyhedra'' in: R.K Guy \textit{et al.} (eds.), \textit{Combinatorial
Structures and Their Applications}, Gordon and Breach, New York, pp. 69-87.

\bibitem{} Fiorini, S., S. Massar, S. Pokutta, H.R. Tiwary, and R. de Wolf
(2011). Linear vs. Semidefinite Extended Formulations: Exponential
Separation and Strong Bounds. \textit{Unpublished} (Available at:
http://arxiv.org/pdf/1111.0837.pdf).

\bibitem{} Fiorini, S., S. Massar, S. Pokutta, H.R. Tiwary, and R. de Wolf
(2012). Linear vs. Semidefinite Extended Formulations: Exponential
Separation and Strong Bounds. \textit{Proceedings of the 44}$^{\mathit{th}}$%
\textit{\ ACM Symposium on the Theory of Computing }(\textit{STOC '12}), New
York, NY, pp. 95-106.

\bibitem{} J\"{u}nger, M., T. Liebling, D. Naddef, G. Nemhauser, W.
Pulleyblank, G. Reinelt, G. Rinaldi, and L. Wolsey, eds (2010). \textit{50
Years of Integer Programming 1958-2008}. Springer, New York, NY.

\bibitem{} \label{Kaibel}Kaibel, V. (2011). Extended formulations in
combinatorial optimization. \textit{Optima} 85:2, pp.2-7.

\bibitem{} \label{Kaibel2}Kaibel (2013). \textit{Private Email
Communications On: }``LP-Formulations that Induce Extensions'' by V. Kaibel,
M. Walter, and S. Weltge (2013).

\bibitem{} Kaibel, V. and M. Walter (2013). Simple Extensions of Polytopes. 
\textit{Unpublished} (Available at: http://arxiv.org/pdf/1311.3470v1.pdf).

\bibitem{} Kaibel, V. and S. Weltge (2013). Lower bounds on the Sizes of
Integer Programs Without Additional Variables. \textit{Unpublished}
(Available at: http://arxiv.org/pdf/1311.3255v1.pdf).

\bibitem{} Martin, R. K. (1991). Using separation algorithms to generate
mixed integer model reformulations. \textit{Operations Research Letters}
10:3, pp. 119-128.

\bibitem{} \label{Rockafelar}Rockaffelar, R. T. (1997). \textit{Convex
Analysis}. Princeton University Press.

\bibitem{} \label{Yannakakis}Yannakakis, M. (1991). Expressing combinatorial
optimization problems by linear programming. \textit{Journal of Computer and
System Sciences} 43:3, pp. 441-466.

\bibitem{} \label{Yannakakis2}Yannakakis, M. (2013). \textit{Private Email
Communications.}
\end{thebibliography}
\end{document}